\colorlet{shadecolor}{gray!20} 
\tikzset{node distance=1.5cm, 
            every state/.style={ 
                  semithick,
                  fill=gray!10},
            initial text={}, 
            double distance=2pt, 
            every edge/.style={ 
                   draw,
                   ->,>=stealth', 
                   auto,
            semithick}}
\newcommand\A{\mathcal{A}}
\newcommand\AP{\mathcal{A'}}
\newcommand{\Suff}{\mbox{\it Suff\/}}
\newcommand{\DAWG}{\mbox{\it DAWG\/}}
\newcommand{\out}{\mbox{\it Out\/}}
\DeclareMathOperator{\Mark}{mark}
\DeclareMathOperator{\size}{size}
\DeclareMathOperator{\Fact}{Fact}
\newcounter{hours}\newcounter{minutes}
\definecolor{lime}{HTML}{A6CE39}
\DeclareRobustCommand{\orcidicon}{%
	\begin{tikzpicture}
	\draw[lime, fill=lime] (0,0)
	circle [radius=0.16]
	node[white] {{\fontfamily{qag}\selectfont \tiny ID}};
	\draw[white, fill=white] (-0.0625,0.095)
	circle [radius=0.007];
	\end{tikzpicture}
	\hspace{-2mm}
}
\xdef\csname orcid\x\endcsname{\noexpand%
 \href{https://orcid.org/\csname orcidauthor\x\endcsname}{\noexpand\orcidicon}}
\begin{document}
\title{Fast detection of specific fragments against a set of sequences}
%
%
\author{Marie-Pierre B\'eal\inst{1}\orcidA{} \and
Maxime Crochemore\inst{1}\orcidB{}}
\authorrunning{Marie-Pierre B\'eal et al.}
%
\institute{Univ. Gustave Eiffel, CNRS, LIGM, F-77454
  Marne-la-Vall\'ee, France
\email{\{marie-pierre.beal,maxime.crochemore\}@univ-eiffel.fr}
}
\maketitle              

\begin{abstract}
We design alignment-free techniques for comparing a sequence or word, called a target, against a set of words, called a reference. A target-specific factor of a target $T$ against a reference $R$ is a factor $w$ of a word in $T$ which is not a factor of a word of $R$ and such that any proper factor of $w$ is a factor of a word of $R$.
We first address the computation of the set of target-specific factors of a target $T$ against a reference $R$, where $T$ and $R$ are finite sets of sequences. The result is the construction of an automaton accepting the set of all considered target-specific factors. The construction algorithm runs in linear time according to the size of $T\cup R$.
The second result consists of the design of an algorithm to compute
all the occurrences in a single sequence $T$ of its target-specific
factors against a reference $R$. The algorithm runs in real-time on
the target sequence, independently of the number of occurrences of
target-specific factors.
\keywords{Specific word  \and Minimal forbidden word \and Suffix automaton.}
\end{abstract}

\section{Introduction}
The goal of this article is to design an alignment-free technique for comparing a sequence or word, called a target, against a set of words, called a reference. 

The motivation comes from the analysis of genomic sequences as done for example by Khorsand et al. in \cite{BonizzoniEtAl2021} in which authors introduce the notion of sample-specific strings. To avoid alignments but to extract interesting elements that differentiate the target from the reference, the chosen specific fragments are minimal forbidden factors, also called minimal absent factors. Target-specific words are factors of the target that are minimal forbidden factors of the reference. These types of factors have already been applied to compare efficiently sequences (see for example \cite{DBLP:journals/iandc/Charalampopoulos18} and references therein), to build phylogenies of biological molecular sequences using a distance based on absent words (see \cite{DBLP:journals/tcs/ChairungseeC12,DBLP:journals/corr/abs-2105-14990},...), to discover remarkable patterns in some genomic sequences (see for example \cite{DBLP:journals/bioinformatics/SilvaPCPF15}) and to improve pattern matching methods (see \cite{DBLP:journals/iandc/CrochemoreHKMPR20}...), to quote only a few applications. In bioinformatics target-specific words act as signatures for newly sequenced biological molecules and help find their characteristics.

The notion of minimal absent factors was introduced by Mignosi et al. \cite{DBLP:conf/birthday/MignosiRS99} (see also \cite{DBLP:journals/aam/BealMRS00}) in relation to combinatorial aspects of some sequences. It has then been extended to regular languages in \cite{DBLP:journals/fuin/BealCM03}, which obviously applies to a finite set of (finite) sequences.
The first linear-time computation is described in \cite{DBLP:journals/ipl/CrochemoreMR98} (see also \cite{CHL07cup}) and, due to the important role of the notion, the efficient computation of minimal forbidden factors has attracted quite a lot of works (see for example \cite{DBLP:journals/bmcbi/PinhoFGR09} and references therein).

In the article, we continue exploring the approach of target-specific words as done in \cite{BonizzoniEtAl2021} by introducing new other algorithmic techniques to detect them. See also the more general view on the usefulness of formal languages to analyze several genomes using pangenomics graphs by Bonizzoni et al. in \cite{BonizzoniEtAl2022}.  

\paragraph{The results.}
First, we address the computation of the set of target-specific factors of  a target $T$ against a reference $R$, where $T$ and $R$ are finite sets of sequences. The result is the construction of an automaton accepting the set of all considered target-specific factors. The construction algorithm runs in linear time according to the size of $T\cup R$.

The second result consists of the design of an algorithm to compute all the occurrences in a single sequence $T$ of its target-specific factors against a reference $R$. The algorithm runs in real-time on the target sequence, independently of the number of occurrences of target-specific factors, after a standard processing of the reference. This improves on the result in \cite{BonizzoniEtAl2021}, where the running time of the main algorithm depends on the number of occurrences of sought factors.

The design of both algorithms uses the notion of suffix links that are used for building efficiently indexing data structures, like suffix trees (see \cite{DBLP:books/cu/Gusfield1997}) and DAWGs also called suffix automata (see \cite{BlumerEtAl1984,CHL07cup}). The links can also be simulated with suffix arrays \cite{DBLP:journals/siamcomp/ManberM93} and their implementations, for example, the FM-index \cite{DBLP:conf/focs/FerraginaM00}. The algorithm in \cite{BonizzoniEtAl2021} uses the FMD index by Li \cite{10.1093/bioinformatics/bts280}. All these data structures can accommodate the sequences and their reverse complements.

\paragraph{Definitions.}
Let $A$ be a finite alphabet and $A^*$ be the set of the finite words
 drawn from the alphabet $A$, including the empty word $\varepsilon$.
A \emph{factor} of a word $u \in A^*$ is a word $v \in A^*$ that satisfies $u = wvt$ for some words $w, t \in A^*$.
A \emph{proper factor} of a word $u$ is a factor distinct from the whole word.
If $P$ is a set of words, we denote by $\Fact(P)$ the set of factors of words in $P$, and, if $P$ is finite, $\size(P)$ denotes the sum of lengths of the words in $P$.

A \emph{minimal forbidden word} (also called a minimal absent word) for a given set of words $L \subseteq A^*$ with respect to a given alphabet $B$ containing $A$ is a word of $B^*$ that does not belong to $L$ but that all proper factors do.

Let $R, T$ be two sets of finite words. A \emph{$T$-specific word
  with respect to $R$} is a word $u$ for which:
  $u$ is a factor of a word of $T$,
  $u$ is not a factor of a word in $R$
  and any proper factor of $u$ is a factor of a word in $R$.
The set $R$ is called the \emph{reference} and $T$ the \emph{target} of the problem. 

Note that a word is a $T$-specific word with respect to $R$ if and only if it is a minimal forbidden word of $\Fact(R)$ with respect to the alphabet of letters occurring in $R \cup T$ and is also in $\Fact(T)$. As a consequence, the set of $T$-specific words with respect to $R$ is both prefix-free and suffix-free. 

It follows from the definition that the set $S$ of $T$-specific words with respect to $R$ is:
\[
A\Fact(R) \cap \Fact(R)A \cap (A^* - \Fact(R)) \cap \Fact(T),
\]
where $A$ is the alphabet of letters of words $R$ and $T$.
It is thus a regular set when $R$ and $T$ are regular, in particular when $R$ and $T$ are finite.

A \emph{finite deterministic automaton} is denoted by $\A = (Q, A, i, F, \delta)$ where $A$ is a finite alphabet, $Q$ is a finite set of states, $i \in Q$ is the unique initial state, $F \subseteq Q$ is the set of final states and $\delta$ is the partial function from $Q \times A$ to $Q$ representing the transitions of the automaton. The partial function $\delta$ extends to $Q \times A^*$ and a word $u$ is accepted by $\A$ if and only if $\delta(i, u)$ is defined and belongs to $F$.

\section{Background: directed acyclic word graph}

In this section, we recall the definition and the construction of the directed acyclic word graph of a finite set of words. This description already appears in \cite{DBLP:journals/fuin/BealCM03}.

Let~$P=\{x_1, x_2, \ldots, x_r\}$ be a finite set of words of size~$r$.
A linear-time construction of a deterministic finite state automaton recognizing~$\Fact(P)$ has been obtained
by Blumer~\emph{et al.\ }in~\cite{BlumerEtAl1984}, \cite{Blumer87complete}, see also~\cite{NavarroRaffinot2002}.
Their construction is an extension of the
well-known incremental construction of the suffix automaton of a single word
(see for instance \cite{Cro86tcs,CHL07cup}).  The words are added one by one
to the automaton. In the sequel, we call this algorithm the {\sc Dawg}
algorithm since it outputs a deterministic automaton called a \emph{directed
acyclic word graph}. Let us denote by~$\DAWG(P)=(Q,A,i,Q,\delta)$ this
automaton. Let $\Suff(v)$ denote the set of suffixes of a word~$v$ and
$\Suff(P)$ the union of all $\Suff(v)$ for $v \in P$. The states
of~$\DAWG(P)$ are the equivalence classes of the right invariant
equivalence~$\equiv_{\Suff(P)}$ defined as follows. If $u, v \in \Fact(P)$,
\[ u \equiv_{\Suff(P)} v \mbox{  iff  } \forall i 1 \leq i \leq r \mbox{ and } \ \ u^{-1}\Suff(x_i)= v^{-1}\Suff(x_i).
\]
and there is a transition labeled by~$a$ from the class of a
word~$u$ to the class of~$ua$. The automaton~$\DAWG(P)$ has a
unique initial state, which is the class of the empty word, and
all its states are final. Note that the syntactic congruence $\sim$
defining the minimal automaton of the language is
\[ u \sim v \mbox{  iff  } \bigcup_{i=1}^{r} u^{-1}\Suff(x_i)= \bigcup_{i=1}^{r} v^{-1}\Suff(x_i)
\]
and is not the same as the above equivalence. In other words,
$\DAWG(P)$ is not always a minimal automaton.

The construction of~$\DAWG(P)$ is performed in time~$O(\size(P)\times \log
|A|)$. A time complexity of~$O(\size(P))$ can be obtained with an
implementation of automata with sparse matrices (see \cite{CHL07cup}).

\begin{example} The deterministic acyclic word graph obtained with
the {\sc Dawg} algorithm from $P= \{abbab, abaab\}$ is displayed
in Figure~\ref{figure.1} where dashed edges represent the suffix links.
Note that this deterministic automaton
is not minimal since states $3$ and $7$, $5$ and $9$, and $6$ and
$10$ can be merged pairwise.
\end{example}

\begin{figure}[hbt]
    \centering
    \begin{tikzpicture}
      \node[state, initial] (0) {$0$};
      \node[state, right of=0] (1) {$1$};
      \node[state, right of=1] (2) {$2$};
      \node[state, right of=2] (3) {$3$};
      \node[state, right of=3] (5) {$5$};
      \node[state, right of=5] (6) {$6$};
      \node[state, above of=2] (4) {$4$};
      \node[state, above of=5] (8) {$8$};
      \node[state, below of=3] (7) {$7$};
      \node[state, below of=6] (9) {$9$};
      \node[state, right of=9] (10) {$10$};
      \draw (0) edge node {$a$} (1);
      \draw (1) edge node {$b$} (2);
      \draw (2) edge node {$b$} (3);
      \draw (3) edge node {$a$} (5);
      \draw (5) edge node {$b$} (6);
      \draw (4) edge node {$b$} (3);
      \draw (8) edge node {$b$} (6);
      \draw (4) edge node {$a$} (8);
        \draw (7) edge node {$a$} (9);
        \draw (8) edge[below] node {$a$} (9);
      \draw (9) edge node {$b$} (10);
      \draw (0) edge[bend left] node {$b$} (4);
      \draw (2) edge[below, bend right] node {$a$} (7);
      \draw (1) edge[below, bend right=40] node {$a$} (9);
      \draw (8) edge[bend left=90] node {$a$} (9);
      \tikzset{node distance=0.7cm};
      \node[above of=0] (m0) {$r,t$};
      \node[above of=1] (m1) {$r,t$};
      \node[above of=2] (m2) {$r,t$};
      \node[above of=3] (m3) {$r$};
      \node[above of=4] (m4) {$r,t$};
      \node[above of=5] (m5) {$r$};
      \node[above of=6] (m6) {$r$};
      \node[above of=7] (m7) {$t$};
      \node[above of=8] (m8) {$r,t$};
      \node[above of=9] (m9) {$t$};
      \node[above of=10] (m10) {$t$};
      \draw (4) edge[dashed, bend right=60] node {} (0);
      \draw (1) edge[dashed, bend left] node {} (0);
      \draw (2) edge[dashed] node {} (4);
       \draw (3) edge[dashed, bend right] node {} (4);
       \draw (5) edge[dashed] node {} (8);
     \draw (6) edge[dashed, bend left] node {} (2);
        \draw (7) edge[dashed] node {} (8);
       \draw (8) edge[dashed, bend right=80] node {} (1);
     \draw (9) edge[dashed, bend left=80] node {} (1);
       \draw (10) edge[dashed, bend left=90] node {} (2);
      \end{tikzpicture}
      \caption{Automaton~$\DAWG(P)$ for $P=\{abbab, abaab\}$. Marks $r,t$ above states are defined in Section \ref{section.specific}}
      \label{figure.1}
\end{figure}
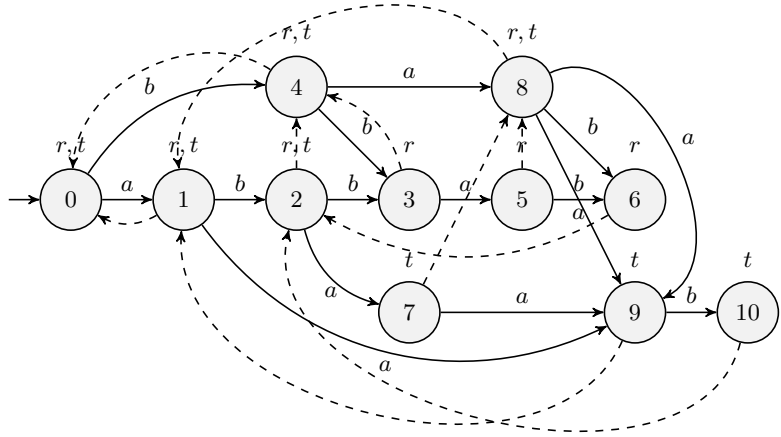
We let $s$ denote the suffix link function associated with~$\DAWG(P)$.
We first define the function $s'$ from $\Fact(P)\setminus\{\varepsilon\}$ to $\Fact(P)$ as follows: for $v \in
\Fact(P)\setminus\{\varepsilon\}$, $s'(v)$ is the longest word~$u \in
\Fact(P)$ that is a suffix of $v$ and for which $u \not
\equiv_{\Suff(P)} v$. Then, if $p=\delta(i,v)$, $s(p)$ is the state $\delta(i,s'(v))$.

\section{Computing the set of {\it T}-specific words}
  \label{section.specific}

In this section, we assume that the reference $R$ and the target $T$
are two finite sets of words and our goal is to compute the set of {\it T}-specific factors of $T$ against $R$. To do so, We first compute the directed acyclic word graph $\DAWG(R \cup T)=(Q, A, i, Q, \delta)$ of $R \cup T$.
Further, we compute a table $\Mark$ indexed by the set of states $Q$ that satisfies: for each state $p$ in $Q$, $\Mark[p]$ is one of the three values $r$, $t$ or both $r,t$ according to the fact that each word labeling a path from $i$ to
$q$ is a factor of some word in $R$ and not of a word in $T$, or is a
factor of a word in $T$ and not of a word in $R$, or is a factor of a word in $R$ and of a word in $T$. This information can be obtained
during the construction of the directed acyclic word graph without increasing the time and space complexity.

The following algorithm outputs a trie (digital tree) of the set of $T$-specific words with respect to $T$ and $R$.

\begin{algo}{Specific-trie}{(Q, A, i, Q, \delta) \mbox{ DAWG of }(R \cup T), s \mbox{ its suffix link}} 
\DOFOR{\mbox{ each } p \in Q \mbox{ with }  \Mark[p] = r,t
  \mbox{ in width-first search from } i}
  \EXT{\textbf{ and for} \mbox{ each } a \in A}
 \IF{(\delta(p,a) \mbox{ defined } \textbf{ and }  \Mark[\delta(p,a)] = t) \textbf{ and } ((p = i) \textbf{ or } }
 \EXT{(\delta(s(p),a) \mbox{ defined }
 \textbf{ and }
 \Mark[\delta(s(p),a)] = r \mbox{ or } r, t))}
\SET{\delta'(p,a)}{\mbox{ new sink}}
\ELSE
\IF{(\delta(p,a) = q \mbox{ with } \Mark[q] = r,t) }
\EXT{\textbf{ and } (q \mbox{ not already reached})}
\SET{\delta'(p,a)}{q}
\FI
\FI
\OD
\RETURN{\A \mbox{, the automaton } (Q, A, i, \{\mbox{sinks}\}, \delta')}
\end{algo}

\begin{example}
The automaton $\DAWG(R \cup T)$
with the input $R=\{abbab\},  T =\{abaab\}$ is shown in Figure~\ref{figure.1}.
 The output of algorithm~{\sc Specific-trie} on
  $\DAWG$ $(R \cup T)$ is shown in
Figure~\ref{figure.2} where the squares are final or sink states of
the trie. The set of $T$-specific words with respect to $R$ is
$\{aa, aba\}$.
\end{example}

\begin{figure}[hbt]
    \centering
    \begin{tikzpicture}
      \node[state, initial] (0) {$0$};
      \node[state, right of=0] (1) {$1$};
      \node[state, right of=1] (2) {$2$};
      \node[state, above of=2] (4) {$4$};
      \node[state, above of=5] (8) {$8$};
      \draw (0) edge node {$a$} (1);
      \draw (1) edge node {$b$} (2);
      \draw (4) edge node {$a$} (8);
      \draw (0) edge[bend left] node {$b$} (4);
      \tikzset{node distance=0.6cm};
      \node[below of=0] (m0) {$r,t$};
      \node[below of=1] (m1) {$r,t$};
      \node[below of=2] (m2) {$r,t$};
      \node[below of=4] (m4) {$r,t$};
      \node[below of=8] (m8) {$r,t$};
      \tikzset{node distance=2cm};
      \tikzset{every state/.append style={rectangle}}
      \tikzset{every state/.style={ 
          {rectangle},minimum size=15pt,
                  semithick,
                  fill=black}};
      \node[state, below of=2] (s2) {$$};
       \node[state, below of=1] (s1) {$$};
       \draw (1) edge node {$a$} (s1);
       \draw (2) edge node {$a$} (s2);
      \end{tikzpicture}
      \caption{The trie of $T$-specific words with respect to $R$.}
      \label{figure.2}
        \end{figure}
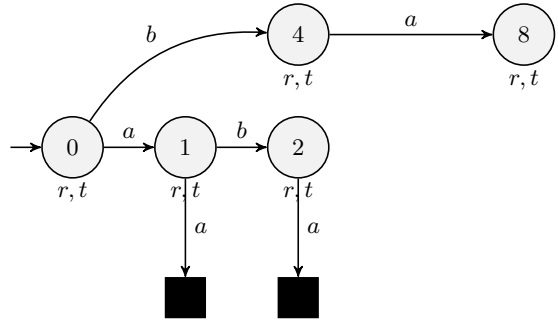

\begin{proposition}
  Let $\DAWG(R \cup T)$ be the output of algorithm {\sc Dawg} on the finite set of
  words~$R \cup T$, let $s$ be its suffix function, and let $\Mark$ be the table defined as above. Algorithm {\sc
  Specific-trie} builds the trie recognizing the set of
$T$-specific words with respect to $R$.
\end{proposition}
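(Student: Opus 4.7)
The plan is to match the set of $T$-specific words with the labels of paths from $i$ to a sink in the trie produced by \textsc{Specific-trie}, by translating the definition of a $T$-specific word into local tests on $\DAWG(R\cup T)$. Any such word $w$ may be written $w = ua$ with $a\in A$ and $u$ its proper prefix; then $u \in \Fact(R) \cap \Fact(T)$ while $ua \in \Fact(T) \setminus \Fact(R)$, so the state $p = \delta(i,u)$ is marked $r,t$ and $\delta(p,a)$ is marked $t$. The only nontrivial requirement that remains is that the suffix of $w$ of length $|w|-1$, namely $u''a$ with $u''$ the suffix of $u$ of length $|u|-1$, lie in $\Fact(R)$; all shorter proper factors are then covered by $u \in \Fact(R)$.

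The key ingredient is the standard length-interval property of the suffix automaton, which extends to $\DAWG(R\cup T)$: the words reaching a state $p$ are exactly the suffixes of the longest word of that class having length in $[\ell(s(p))+1,\ell(p)]$, where $\ell(q)$ denotes the length of the longest word reaching $q$. In particular, each state $p$ has a unique shortest representative $u_p$ of length $\ell(s(p))+1$, and the suffix of $u_p$ of length $\ell(s(p))$ lies in the class of $s(p)$. Two consequences follow. First, since each prefix of $u_p$ is itself in $\Fact(R) \cap \Fact(T)$ and hence reaches a state marked $r,t$, the breadth-first traversal from $i$ reaches $p$ via $u_p$, so $u_p$ is exactly the word labelling the trie path from $i$ to $p$. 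Second, if $u$ lies in the class of $p$ with $|u| > |u_p|$, then the suffix $u''$ of $u$ of length $|u|-1$ still lies in the class of $p$, so $u''a$ reaches $\delta(p,a)$; the mark $t$ of that state then forces $u''a \notin \Fact(R)$, and $ua$ cannot be $T$-specific. Every $T$-specific word arising from a transition $(p,a)$ is therefore necessarily $u_p a$.

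It remains to align the algorithm's test with the minimality condition for $u_p a$. When $p = i$ the only proper factor of $a$ is $\varepsilon \in \Fact(R)$ and the condition is automatic, which is precisely the first branch of the disjunction. Otherwise the suffix $u'_p$ of $u_p$ of length $\ell(s(p))$ reaches $s(p)$, so $u'_p a$ reaches $\delta(s(p),a)$ whenever this transition exists, and $u'_p a \in \Fact(R)$ is equivalent to $\Mark[\delta(s(p),a)]$ being $r$ or $r,t$, which is exactly the algorithm's test. Combining the two directions, a sink is installed from $p$ on letter $a$ if and only if $u_p a$ is $T$-specific, so the trie recognises exactly the set of $T$-specific words. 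The main obstacle I anticipate is establishing the length-interval description of DAWG classes together with the identification of $u'_p$ with a word reaching $s(p)$; these are inherited from the suffix-automaton structure recalled in Section~2, after which the argument reduces to a direct case analysis on the marks of $\delta(p,a)$ and $\delta(s(p),a)$.
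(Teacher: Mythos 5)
Your proof is correct and follows essentially the same route as the paper's: both directions reduce to the observation that the breadth-first traversal labels the trie path to a state $p$ with the shortest word $u_p$ of its class, that the length-$(|u_p|-1)$ suffix of $u_p$ reaches $s(p)$, and that the marks of $\delta(p,a)$ and $\delta(s(p),a)$ encode exactly the membership conditions defining a $T$-specific word. If anything, you are more explicit than the paper on the point that a $T$-specific word $ua$ forces $u$ to be the \emph{shortest} word of its class (via the mark $t$ of $\delta(p,a)$), a step the paper justifies only in passing.
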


\begin{proof}
Let $S$ be the set of of
$T$-specific words with respect to $R$.

Consider a word $ua$ ($a \in A$) accepted by $\A$. Note that $\A$ accepts only nonempty words. Let $p = \delta'(i, u)$. Since the $\DAWG$ automaton is processed with a width-first search, $u$ is the shortest word for which $\delta(i, u) = p$. Therefore, if $u = bv$ with $b \in A$, we have $\delta(i, v) = s(p)$ by definition of the suffix function $s$. When the test “($\delta(p, a)$ defined and $\Mark[\delta(p, a)] = t$) and ($\delta(s(p), a)$ defined and $\Mark[\delta(s(p), a)] = r$ or $r,t$)” is satisfied, this implies that $va \in \Fact(R)$. Thus, $bva \notin \Fact(R)$, while
$bv, va \in \Fact(R)$ and $bva \in \Fact(T)$. So, $ua$ is 
a $T$-specific word with respect to $R$.
If $u$ is the empty word, then $p = i$. 
The transition from $i$
to the sink labeled by $a$ is created under the condition
“$\delta(p, a)$ defined and $\Mark[\delta(p, a)] = t$”, 
which means that $a \in \Fact(T)$. The word $a$ is again 
a $T$-specific word with respect to $R$.
Thus the words accepted by $\A$ are $T$-specific words with respect to $R$.

Conversely, let $ua \in S$. If $u$ is the empty word, this means that $a$ does not occur in $\Fact(R)$ and occurs in $\Fact(T)$ therefore there is a transition labeled by $a$ from $i$ in $\DAWG(R \cup T)$ to a state marked $t$. Thus a transition from $i$ to a sink state in $\A$ created Line 3 and $a$ is accepted by $\A$.
Now assume that $u = bv$. The word $u$ is in $\Fact(R)$. So let $p = \delta(i, u)$. Note that $u$ is
the shortest word for which $p = \delta(i, u)$, because all
such words are suffixes of each other in the $\DAWG$ automaton. 
The word $ua$ is not in $\Fact(R)$ and is in $\Fact(T)$, so
the condition “$\delta(p, a)$ defined and $\Mark[p, a] = t$” is satisfied. Let $q = s(p)$. We have $q = \delta(i, v)$ 
because of the minimality of the length of $u$ and the definition of $s$. 
Since $va$ is in $\Fact(R)$, the condition
“$\delta(s(p), a)$ defined and $\Mark[\delta(s(p),a)] = r$ or $r,t$” at Line 2 is satisfied which yields
the creation of a transition at Line 3 to make $\A$ accept
$ua$ as wanted. 
\end{proof}

A main point in algorithm {\sc Specific-trie} is that it uses the
function $s$ defined on states of the input DAWG. It is not
possible to proceed similarly when considering the minimal
factor automaton of~$\Fact(R \cup T)$ because there is no analogue
function $s$. However, it is possible to reduce the automaton
$\DAWG(R \cup T)$ by merging states having the same future (right
context) and the same image by $s$. For example, on the DAWG of
Figure~\ref{figure.1}, states $6$ and $10$ can be merged because
$s(6)=s(10)=2$. States $3$ and $7$, nor states $5$ and $9$ cannot
be merged with the same argument.

\begin{proposition}
  Algorithms {\sc Dawg} and {\sc Specific-trie} together run in time\\
  $O(\size(R \cup T)\times |A|)$ with input two finite sets of words
  $R, T$, if the transition
  functions are implemented by transition matrices.
\end{proposition}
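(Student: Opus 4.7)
The plan is to bound the running times of the two algorithms separately, both in terms of the number of states of $\DAWG(R\cup T)$ and of the alphabet size $|A|$, then add them.

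First, for the \textsc{Dawg} stage, I would recall that the standard incremental construction of~$\DAWG(P)$ creates $O(\size(P))$ states and $O(\size(P))$ transitions, and that each step of the construction (extension of the automaton by one letter, update of the suffix link function~$s$) is amortised constant time. Taking $P=R\cup T$, the total work for the incremental part is~$O(\size(R\cup T))$. With a transition-matrix implementation, however, each new state requires the allocation and initialisation of a row of size~$|A|$, which raises the bound to $O(\size(R\cup T)\times|A|)$. I would also argue that the \Mark\ table can be produced within the same budget: one sets $\Mark[p]=r$, $t$ or $r,t$ at the moment $p$ is created or cloned, inheriting and updating marks along the suffix-link chain during each word insertion; a standard amortised argument on the length of the traversed suffix-link path gives an $O(\size(R\cup T))$ overhead, which is absorbed in the matrix cost.

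Second, for the \textsc{Specific-trie} stage, the algorithm performs a breadth-first search of the input DAWG and, at each state $p$ marked $r,t$, iterates over the letters of~$A$. Thanks to the matrix representation, each access $\delta(p,a)$ and $\delta(s(p),a)$ is $O(1)$; reading $\Mark[\cdot]$ and updating $\delta'$ are also $O(1)$; the suffix link $s(p)$ is available from the previous stage. The body of the inner loop is therefore constant-time, and the total work of the two nested loops is $O(|Q|\times|A|)=O(\size(R\cup T)\times|A|)$. The BFS traversal itself is linear in the number of states plus transitions, which is $O(\size(R\cup T))$, so it does not dominate.

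Summing the two contributions yields the stated bound. The only step that requires a little care is the maintenance of the \Mark\ table during the \textsc{Dawg} construction, so as to be sure that no extra factor beyond $|A|$ sneaks in; I expect this to be the main point to justify cleanly, but it is a routine amortised analysis along suffix links analogous to the one already used for the construction of $\DAWG(R\cup T)$ itself.
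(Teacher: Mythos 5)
The paper states this proposition without any proof, so there is nothing to compare your argument against; on its own merits, your argument is correct and supplies the standard justification. The decomposition is the natural one: $O(\size(R\cup T))$ states and transitions for the DAWG, an extra factor $|A|$ from initialising a transition-matrix row per state, and then $O(|Q|\times|A|)$ for the letter loop inside the breadth-first search of {\sc Specific-trie}, where every access to $\delta$, $\delta'$, $s$ and $\Mark$ is constant time. The one point you rightly flag as needing care, the maintenance of $\Mark$, is asserted without detail in the paper as well (``this information can be obtained during the construction \dots without increasing the time and space complexity''). To close it cleanly you would add two observations: (i) the marking is well defined on states, because two words in the same class $\equiv_{\Suff(P)}$ have $u^{-1}\Suff(x_i)=v^{-1}\Suff(x_i)$ for every $i$, hence are factors of exactly the same words $x_i$; and (ii) the suffix-link propagation of a mark can stop as soon as it meets a state already carrying that mark, which is sound because a state marked $r$ (\resp $t$) always has its suffix-link ancestors marked $r$ (\resp $t$), and which bounds the total propagation work by the number of states per mark value. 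With that, no factor beyond $|A|$ appears and the stated bound follows.
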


If $P$ is a set of words, we denote by $A_P$ the set of letters occurring in $P$.

\begin{proposition}
  Let $R, T$ be two finite sets of words.
  The number of $T$-specific words with respect to $R$ is no more than
   $(2 \size(R) - 2)(|A_R|-1) +
  |A_T \setminus A_R|-|A_R| +m$, if $\size(R) > 1$, where~$m$ the
  number of words in $R$.
  The bound becomes $|A_T \setminus A_R|$ when $\size(R) \leq 1$.
\end{proposition}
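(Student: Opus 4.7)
The plan is to identify each $T$-specific word with a combinatorial object attached to $\DAWG(R)$ and then to exploit the standard linear-size bound on that automaton. I would first split the $T$-specific words by their last letter. If $w=ua$ is $T$-specific and $a\in A_T\setminus A_R$, the proper suffix $a$ of $w$ must lie in $\Fact(R)$, which forces $u=\varepsilon$. Such words are therefore single letters and contribute at most $|A_T\setminus A_R|$ to the total.

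For $a\in A_R$ the argument mirrors the proof of the previous proposition, but carried out inside $\DAWG(R)$ rather than $\DAWG(R\cup T)$: each $T$-specific word $ua$ corresponds bijectively to a pair $(p,a)$ where $p=\delta_R(i,u)$ is a state of $\DAWG(R)$, $u$ is the shortest factor in its class (reached first in a width-first search, as in the previous proof), $\delta_R(p,a)$ is undefined, and $\delta_R(s(p),a)$ is defined. Because $a\in A_R$ forces $\delta_R(i,a)$ to be defined, the state $p$ is never the initial state, so it ranges over the non-initial states of $\DAWG(R)$, of which there are at most $2\,\size(R)-2$ under the hypothesis $\size(R)>1$.

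It then remains to bound, at each non-initial state $p$, the number of admissible letters $a\in A_R$. Every non-sink state of $\DAWG(R)$ carries at least one outgoing edge (all edges are labelled by letters of $A_R$), so it contributes at most $|A_R|-1$ admissible letters; the at most $m$ sink states, one per word of $R$, may each contribute up to $|A_R|$ letters; and additional slack is available at the $|A_R|$ immediate successors of the initial state, whose suffix link lands on $i$ and thereby restricts which missing letters can satisfy the minimal-forbidden condition. Aggregating these three contributions and adding the $|A_T\setminus A_R|$ length-one words from the first step yields the announced bound. When $\size(R)\le 1$, $\DAWG(R)$ has no non-initial state able to generate a minimal forbidden factor of length at least two, so only the letters of $A_T\setminus A_R$ survive. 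The main obstacle will be the final bookkeeping: making sure that the respective contributions of the sinks, the successors of the initial state and the generic non-initial states assemble precisely into the constant $m-|A_R|$ of the formula rather than a slightly larger quantity.
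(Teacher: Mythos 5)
Your overall strategy---dispose of the length-one words via $A_T\setminus A_R$, then inject every longer $T$-specific word $ua$ into a pair $(p,a)$ of $\DAWG(R)$ with $p=\delta_R(i,u)$ non-initial, $\delta_R(p,a)$ undefined and $\delta_R(s(p),a)$ defined---is sound, and the first two steps (the reduction to minimal forbidden words of $\Fact(R)$, the fact that $u$ is necessarily the shortest word of its class, the bound $2\,\size(R)-2$ on the number of non-initial states) are correct. But the proof is not finished where it matters most: the entire content of the constant $-|A_R|+m$ in the bound is deferred to a ``final bookkeeping'' that you yourself flag as the main obstacle, and the ingredients you propose for it do not assemble. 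Concretely, the claim that $\DAWG(R)$ has at most $m$ states without outgoing transitions (``one per word of $R$'') is false in general: for $R=\{ab,cb,db\}$ the classes of $b$, $ab$, $cb$ and $db$ all have out-degree zero, giving four such states for $m=3$. And the ``additional slack at the $|A_R|$ immediate successors of the initial state'' is never quantified; as stated, your accounting yields at best $(2\size(R)-2)(|A_R|-1)+(\#\text{sinks})$, which neither contains the $-|A_R|$ term nor controls the sink count by $m$. So the proposal does not establish the stated inequality.

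For comparison, the paper does none of this work: its proof is a one-line observation that the set of $T$-specific words is contained in the set of minimal forbidden words of $\Fact(R)$ with respect to the alphabet $A_R\cup A_T$, followed by a citation of Corollary~4.1 of the B\'eal--Crochemore--Mignosi paper, which is exactly the bound being claimed. Your plan is essentially an attempt to reprove that cited corollary from scratch inside $\DAWG(R)$. That is a legitimate and more self-contained route---and your reduction of the problem to counting pairs $(p,a)$ with $a\in\out(s(p))\setminus\out(p)$ is the right starting point for it---but to complete it you would need to bound $\sum_{p\neq i}\bigl(\deg(s(p))-\deg(p)\bigr)$ exactly, exploiting the structure of the suffix-link tree and of the out-degree-zero states, rather than the per-state estimates you sketch.
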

\begin{proof}
We let $S$ denote the set of $T$-specific words with respect to $R$. Since $S$
is included in the set of minimal forbidden words of $\Fact(R)$ with respect to the
alphabet $A = A_R \cup A_T$, the bound comes from \cite[Corollary
4.1]{DBLP:journals/fuin/BealCM03}.
\end{proof}

\def\dd{\mathinner{\ldotp\ldotp}}	
\def\Ts{\mathit{Ts}}

\section{Computing occurrences of target-specific factors: the {\it T}-specific table}
In this section, we consider that $R$ and $T$ are just words.
The goal of the section is to design an algorithm that computes all the
occurrences of $T$-specific words in $T$. To do so, we define the
$T$-specific table associated with the pair $R,T$ of words of the problem. 

A letter of $T$ at position $k$ is denoted by $T[k]$ and $T[i\dd j]$ denotes the factor $T[i]T[i+1]\cdots T[j]$ of $T$. Then, the {\it T}-specific table $\mathit{Ts}$ is defined, for $i=0,\dots,|T|-1$, by

$$\Ts[i] = \left\{
\begin{array}{ll}
 j, & \mbox{ if } T[i\dd j] \mbox{ is } T\mbox{-specific}, i\leq j,\\
-1, & \mbox{ else.}
\end{array}
\right.$$

\paragraph{Note 1.}
Since the set of $T$-specific factors is both prefix-free and suffix-free,
for each position $k$ on $T$ there is at most one $T$-specific factor of $T$
starting at $k$ and for each position $j$ on $T$ there is at most one
$T$-specific factor of $T$ ending at $j$.

\paragraph{Note 2.}
Instead of computing the {\it T}-specific table $\mathit{Ts}$, in a straightforward way, the algorithm below can be transformed to compute the list of pairs $(i,j)$ of positions on $T$ for which $\Ts[i] =j$ and $j\neq -1$.

\medskip
To compute the table we use $\mathcal{R}$, the suffix automaton of $R$, with
its transition function $\delta$ and equipped with both the suffix link $s$
(used here as a failure link) and the length function $\ell$ defined on states by: $\ell[p]=\max\{z\in A^*\mid \delta(i,z)=p\}$. Functions $s$ and $\ell$ transform the automaton into a search machine, see \cite[Section 6.6]{CHL07cup}.

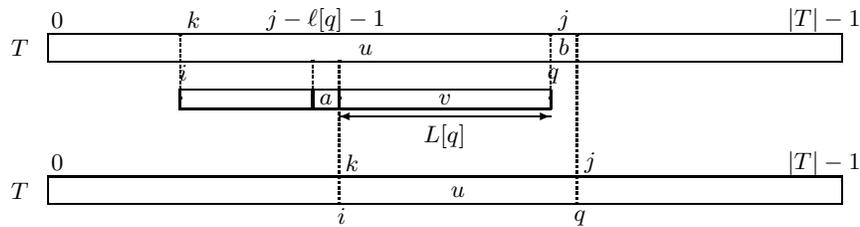
\begin{figure}[hbt]
{\begin{picture}(300,90)(-15,-40)
\put(-14,31){$T$}
\put(  0,29){\framebox(300,10){}}
\put(  0,41){\makebox(7,7){$0$}}
\put( 50,41){\makebox(10,7){$k$}}
\put(100,41){\makebox(10,7){$j-\ell[q]-1$}}
\put(190,41){\makebox(10,7){$j$}}
\put(290,41){\makebox(7,7){$|T|-1$}}
\put( 50,30){\makebox(140,7){$u$}}
\put(190,31){\makebox(10,7){$b$}}
\put( 46,21){\makebox(10,7){$i$}}
\put(186,20){\makebox(10,7){$q$}}
\put(110,8){\vector(1,0){80}}
\put(190,8){\vector(-1,0){80}}
\put(110,-3){\makebox(80,7){$L[q]$}}
\put( 50,11){\framebox(50,7){}}
\put(100,11){\framebox(10,7){$a$}}
\put(110,11){\framebox(80,7){$v$}}
\put( 50,15){\dashbox{1}(0,24){}}
\put(110,15){\dashbox{1}(0,14){}}
\put(100,15){\dashbox{1}(0,14){}}
\put(110,15){\dashbox{1}(0,14){}}
\put(190,15){\dashbox{1}(0,24){}}
\put(200,29){\dashbox{1}(0,10){}}
\put(-14,-23){$T$}
\put(  0,-25){\framebox(300,10){}}
\put(  0,-13){\makebox(7,7){$0$}}
\put(110,-13){\makebox(10,7){$k$}}
\put(200,-13){\makebox(10,7){$j$}}
\put(290,-13){\makebox(7,7){$|T|-1$}}
\put(110,-24){\makebox(90,7){$u$}}

\put(106,-33){\makebox(10,7){$i$}}
\put(196,-34){\makebox(10,7){$q$}}

\put(110,-25){\dashbox{1}(0,36){}}
\put(200,-25){\dashbox{1}(0,54){}}
\end{picture}}
\caption{A $T$-specific word found: when $u\in\Fact(R)$ and
$ub\not\in\Fact(R)$, either $avb$ or $b$ is a $T$-specific factor with respect to $R$ ($a,b$ are letters). \label{fig6.1}}
\end{figure}

Figure~\ref{fig6.1} illustrates the principle of Algorithm \Algo{TsTable}. Let us assume the factor $u=T[k\dd j-1]$ is a factor of $R$ but $ub$ is not for some letter $b$.
Then, let $v$ be the longest suffix of $u$ for which $vb$ is a factor of $R$.
If it exists, then clearly $avb$, with $a$ letter preceding $v$, is
$T$-specific. Indeed, $av, vb\in\Fact(R)$ and $avb\not\in\Fact(R)$, which
means that $avb$ is a minimal forbidden word of $R$ while occurring in $T$.
Therefore, setting $q=\delta(i,u)$, $\Ts[j-\ell[q]-1]=j$ since $\ell[q]=|v|$ due to a property of the DAWG of $\mathcal{R}$. If there is no suffix of $u$ satisfying the condition, the letter $b$ alone is $T$-specific and $\Ts[j]=j$.

\medskip
\begin{algo}{TsTable}{T \mbox{ target word},\mathcal{R}
 \mbox{ DAWG}(R), i \textit{ initial}(\mathcal{R})}
  \SET{(q,j)}{(i,0)} \label{algo2-1}
  \DOWHILE{j < |T|}
    \SET{\Ts[j]}{-1}
    \IF{\delta(q,T[j]) \mbox{ undefined}}
      \DOWHILE{q \neq i
          \mbox{ and } \delta(q,T[j]) \mbox{ undefined}} \label{algo2-5}
        \SET{q}{s[q]}
      \OD \label{algo2-6}
      \IF{\delta(q,T[j]) \mbox{ undefined}} \RCOM{8}{$q = i$} \label{algo2-7}
        \SET{\Ts[j]}{j}
        \SET{j}{j+1}
      \ELSE
        \SET{\Ts[j-\ell[q]-1]}{j}
        \SET{(q,j)}{(\delta(q,T[j]),j+1)}
      \FI
    \ELSE
      \SET{(q,j)}{(\delta(q,T[j]),j+1)} \label{algo2-12}
    \FI
  \OD
\RETURN{\Ts}
\end{algo}

\begin{theorem}
The DAWG of the reference set $R$ of words being preprocessed, applied to
a word $T$, Algorithm \Algo{TsTable} computes its $T$-specific table with
respect to $R$ and runs in linear time, i.e. $O(|T|)$ on a fixed-size
alphabet.
\end{theorem}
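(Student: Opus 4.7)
The plan is to argue correctness via an invariant on the state $q$ and to bound the running time by a standard amortized argument on the length of the currently matched suffix.

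For correctness, the main invariant I would maintain is that, at the start of each outer-loop iteration, the pair $(q,j)$ represents the longest suffix $u$ of $T[0\dd j-1]$ lying in $\Fact(R)$, in the sense that $q=\delta(i,u)$; this is the standard pattern-matching invariant for the DAWG used as a search machine. The key additional fact I would verify is that, when $\delta(q,T[j])$ is undefined and the inner \textbf{while} loop exits at a state where the transition by $T[j]$ becomes defined, the value $\ell[q]$ equals the length of the longest suffix $v$ of $u$ such that $vT[j]\in\Fact(R)$. This uses the property that a suffix link points to the class whose longest representative is the longest proper suffix of the current longest representative lying in a different class, together with the fact that all words of a given class share the same outgoing transitions in the DAWG, so that failing to extend at $q$ certifies that no suffix of $u$ of length in $(\ell[s(q)],\ell[q]]$ is extendible either.

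Given this invariant, correctness reduces to the case analysis of Figure~\ref{fig6.1}: when the inner loop exits at some $q$ with $\delta(q,T[j])$ defined, the factor $T[j-\ell[q]-1\dd j]$ has the form $avb$ with $|v|=\ell[q]$, $av,vb\in\Fact(R)$ and $avb\not\in\Fact(R)$, hence a minimal forbidden word of $\Fact(R)$ occurring in $T$, i.e.\ a $T$-specific word; when the loop drives $q$ back to $i$ without recovering an extendible state, $T[j]\not\in\Fact(R)$ and the letter $T[j]$ itself is $T$-specific. Conversely, if some $T$-specific factor $avb$ ends at position $j$, one checks that $av$ is the longest suffix of $T[0\dd j-1]$ extendible by $T[j]$ within $\Fact(R)$, so the iteration at $j$ reaches exactly the state with $\ell[q]=|v|$ and correctly writes $j$ into $\Ts[j-\ell[q]-1]$. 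The prefix- and suffix-free property of the set of $T$-specific words (Note~1) guarantees that a single assignment per ending position suffices.

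For the linear-time bound I would use the classical potential argument with $\Phi$ the length of the currently matched suffix. Each outer iteration advances $j$ by one and changes $\Phi$ by at most $+1$, while every iteration of the inner suffix-link loop strictly decreases $\Phi$; since $\Phi$ is non-negative, starts at $0$, and grows by at most $|T|$ in total, the total number of suffix-link traversals is $O(|T|)$, which combined with a constant-time treatment of each outgoing transition on a fixed-size alphabet yields the announced $O(|T|)$ running time. The main technical obstacle I anticipate is the precise verification of the equality $\ell[q]=|v|$ at the exit of the inner loop, since it is exactly what makes the index $j-\ell[q]-1$ point to the correct starting position of the detected $T$-specific factor.
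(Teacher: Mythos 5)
Your proposal is correct and takes essentially the same route as the paper: the same loop invariant $q=\delta(i,u)$ with $u$ the currently matched suffix of $T[0\dd j-1]$, the same case analysis turning $avb$ (or a single letter) into a minimal forbidden word of $\Fact(R)$ occurring in $T$, and the same amortized argument that each suffix-link traversal strictly decreases the matched length; you actually supply more detail than the paper on the key equality $\ell[q]=|v|$. One small slip: in the converse direction it is $v$, not $av$, that is the longest suffix of $T[0\dd j-1]$ extendible by $T[j]$ within $\Fact(R)$ — your subsequent conclusion $\ell[q]=|v|$ already reflects the corrected statement.
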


\begin{proof}
The algorithm implements the ideas detailed above. A more formal proof relies
on the invariant of the while loop: $q=\delta(i,u)$, where $i$ is the initial
state of the suffix automaton of $R$ and $u=T[k\dd j]$ for
a position $k\leq j$. Since $k=j-|u|$, it is left implicit in the algorithm.
The length $|u|$ could be computed and then incremented when $j$ is. It is
made explicit only at line 10 as $L[q]+1$ after computing the suffix $v$ of
$u$.

For example, when $v$ exists, $u$ is changed to $vb$ and $j$
is incremented, which maintains the equality.

As for the running time, note that instructions at lines~\ref{algo2-1} and \ref{algo2-7}-\ref{algo2-12} execute
in constant time for each value of $j$. All the executions of the instruction
at line~\ref{algo2-6} execute in time $O(|T|)$ because the link $s$ reduces strictly the
potential length of the $T$-specific word ending at $j$, that is, it
virtually increments the starting position of $v$ in the picture.

Thus the whole execution is done in time $O(|T|)$.
\end{proof}

Algorithm \Algo{TsTable} can be improved to run in real-time on a fixed-size alphabet. This is done by optimizing the suffix link $s$ defined on the automaton $\mathcal{R}$. To do so, let us define, for each state $q$ of $\mathcal{R}$,
$$\out(q)=\{a \mid \delta(q,a) \mbox{ defined for letter } a\}.$$
Then, the optimised suffix link $G$ is defined by $G[\textit{initial}(\mathcal{R})]=\textbf{nil}$ and, for any other state $q$ of $\mathcal{R}$, by
$$G[q]=\left\{\begin{array}{ll}
 s[q], & \mbox{ if } \out(q)\subset\out(s[q]),\\
 G[s[q]], & \mbox{ else.} 
\end{array}
\right.$$
Note that, since we always have $\out(q)\subseteq\out(s[q])$, the definition of $G$ can be reformulated as
$$G[q]=\left\{\begin{array}{ll}
 s[q], & \mbox{ if } \deg(q)<\deg(s[q]),\\
 G[s[q]], & \mbox{ else,} 
\end{array}
\right.$$
where $\deg$ is the outgoing degree of a state. 
Therefore, its computation can be realized in linear time with respect to the number of states of $\mathcal{R}$. After substituting $G$ for $s$ in Algorithm \Algo{TsTable}, when the alphabet is of size $\alpha$ the instruction at line~\ref{algo2-6} executes no more than $\alpha$ times for each value of $q$. So the time to process a given state $q$ is constant.
This is summarized in the next corollary.

\begin{corollary}
When using the optimized suffix link, Algorithm \Algo{TsTable} runs in real-time on a fixed-size alphabet.
\end{corollary}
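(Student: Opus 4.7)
The plan is to establish two things in order: first, that replacing the suffix link $s$ by the optimised link $G$ in Algorithm~\Algo{TsTable} preserves the output, and second, that the inner \textbf{while} loop at lines~\ref{algo2-5}--\ref{algo2-6} then performs only $O(\alpha)$ iterations per increment of $j$, where $\alpha=|A|$.

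For correctness, I would unroll the inner loop along the ordinary suffix-link chain $q_0=q$, $q_1=s[q_0]$, $q_2=s[q_1]$, \dots, which terminates either at the initial state or at the first $q_k$ with $\delta(q_k,T[j])$ defined. The key property, already noted before the corollary, is $\out(q_m)\subseteq\out(q_{m+1})$. So whenever consecutive links satisfy $\out(q_m)=\out(q_{m+1})$, the letter $T[j]$ is absent from both: it is absent from $q_0$, and it remains absent on any $q_m$ with $m<k$ by minimality of $k$. By the definition of $G$, a $G$-step from $q_m$ skips exactly a maximal run of indices on which $\out$ is constant, landing at the first index where $\out$ strictly grows. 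A short induction then shows that the $G$-chain from $q_0$ is a subsequence of the $s$-chain that ends at the very same state $q_k$. In particular $\ell[q_k]$ is read at the same state, so each assignment to $\Ts$ in the body of the outer loop is unchanged, and the table returned is identical to the one returned by the unoptimised algorithm.

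For the running time, I would use the defining property $\out(q)\subsetneq\out(G[q])$: each $G$-step enlarges the outgoing alphabet by at least one letter. Since $\out(\cdot)\subseteq A$ and $|A|=\alpha$, the inner loop can execute at most $\alpha$ successive $G$-steps before either reaching the initial state or reaching a state whose outgoing set contains $T[j]$. All other operations in one pass of the outer \textbf{while} body — one transition lookup, possibly one assignment into $\Ts$, and incrementing $j$ — take constant time. Consequently the outer loop spends $O(\alpha)$ time per increment of $j$, which is $O(1)$ on a fixed alphabet; this is the definition of real-time processing of the target.

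The delicate step will be the correctness argument: one must make sure that the $G$-chain neither overshoots the target state $q_k$ nor lands on a state where the stored length $\ell$ differs from $\ell[q_k]$. The monotonicity $\out(q)\subseteq\out(s[q])$ is precisely what rules out overshooting, because the first index in the $s$-chain at which $\out$ strictly grows is at most $k$; once this is in place, the remaining accounting for the $\alpha$ bound on the inner loop is a direct consequence of the strict inclusion in the definition of $G$.
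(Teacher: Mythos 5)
Your proposal is correct and takes essentially the same route as the paper: the paper likewise bounds the inner loop by $\alpha$ iterations per target position because each $G$-step strictly enlarges $\out(\cdot)\subseteq A$, so processing each symbol of $T$ takes constant time on a fixed-size alphabet. You additionally spell out why substituting $G$ for $s$ preserves the computed table (the $G$-chain is a subsequence of the $s$-chain that stops at the same state, since the first strict growth of $\out$ along the $s$-chain occurs no later than the first state admitting the letter $T[j]$), a point the paper leaves implicit.
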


On a more general alphabet of size $\alpha$, the processing of a given state of the automaton can be done in time $\log\alpha$.



\bibliographystyle{abbrv}
\bibliography{algo}

\begin{thebibliography}{10}

\bibitem{DBLP:journals/fuin/BealCM03}
M.~B{\'{e}}al, M.~Crochemore, F.~Mignosi, A.~Restivo, and M.~Sciortino.
\newblock Computing forbidden words of regular languages.
\newblock {\em Fundam. Informaticae}, 56(1-2):121--135, 2003.

\bibitem{DBLP:journals/aam/BealMRS00}
M.~B{\'{e}}al, F.~Mignosi, A.~Restivo, and M.~Sciortino.
\newblock Forbidden words in symbolic dynamics.
\newblock {\em Adv. Appl. Math.}, 25(2):163--193, 2000.

\bibitem{BlumerEtAl1984}
A.~Blumer, J.~Blumer, A.~Ehrenfeucht, D.~Haussler, and R.~M. McConnell.
\newblock Building the minimal {DFA} for the set of all subwords of a word
  on-line in linear time.
\newblock In J.~Paredaens, editor, {\em Automata, Languages and Programming,
  11th Colloquium, Antwerp, Belgium, July 16-20, 1984, Proceedings}, volume 172
  of {\em Lecture Notes in Computer Science}, pages 109--118. Springer, 1984.

\bibitem{Blumer87complete}
A.~Blumer, J.~Blumer, D.~Haussler, R.~McConnell, and A.~Ehrenfeucht.
\newblock Complete inverted files for efficient text retrieval and analysis.
\newblock {\em Journal of the ACM}, 34(3):578--595, 1987.

\bibitem{BonizzoniEtAl2022}
P.~Bonizzoni, C.~D. Felice, Y.~Pirola, R.~Rizzi, R.~Zaccagnino, and R.~Zizza.
\newblock Can formal languages help pangenomics to represent and analyze
  multiple genomes?
\newblock In V.~Diekert and M.~V. Volkov, editors, {\em Developments in
  Language Theory - 26th International Conference, {DLT} 2022, Tampa, FL, USA,
  May 9-13, 2022, Proceedings}, volume 13257 of {\em Lecture Notes in Computer
  Science}, pages 3--12. Springer, 2022.

\bibitem{DBLP:journals/corr/abs-2105-14990}
G.~Castiglione, J.~Gao, S.~Mantaci, and A.~Restivo.
\newblock A new distance based on minimal absent words and applications to
  biological sequences.
\newblock {\em CoRR}, abs/2105.14990, 2021.

\bibitem{DBLP:journals/tcs/ChairungseeC12}
S.~Chairungsee and M.~Crochemore.
\newblock Using minimal absent words to build phylogeny.
\newblock {\em Theor. Comput. Sci.}, 450:109--116, 2012.

\bibitem{DBLP:journals/iandc/Charalampopoulos18}
P.~Charalampopoulos, M.~Crochemore, G.~Fici, R.~Mercas, and S.~P. Pissis.
\newblock Alignment-free sequence comparison using absent words.
\newblock {\em Inf. Comput.}, 262:57--68, 2018.

\bibitem{Cro86tcs}
M.~Crochemore.
\newblock Transducers and repetitions.
\newblock {\em Theoretical Computer Science}, 45(1):63--86, 1986.

\bibitem{CHL07cup}
M.~Crochemore, C.~Hancart, and T.~Lecroq.
\newblock {\em Algorithms on Strings}.
\newblock Cambridge University Press, 2007.
\newblock 392 pages.

\bibitem{DBLP:journals/iandc/CrochemoreHKMPR20}
M.~Crochemore, A.~H{\'{e}}liou, G.~Kucherov, L.~Mouchard, S.~P. Pissis, and
  Y.~Ramusat.
\newblock Absent words in a sliding window with applications.
\newblock {\em Inf. Comput.}, 270, 2020.

\bibitem{DBLP:journals/ipl/CrochemoreMR98}
M.~Crochemore, F.~Mignosi, and A.~Restivo.
\newblock Automata and forbidden words.
\newblock {\em Inf. Process. Lett.}, 67(3):111--117, 1998.

\bibitem{DBLP:conf/focs/FerraginaM00}
P.~Ferragina and G.~Manzini.
\newblock Opportunistic data structures with applications.
\newblock In {\em 41st Annual Symposium on Foundations of Computer Science,
  {FOCS} 2000, 12-14 November 2000, Redondo Beach, California, {USA}}, pages
  390--398. {IEEE} Computer Society, 2000.

\bibitem{DBLP:books/cu/Gusfield1997}
D.~Gusfield.
\newblock {\em Algorithms on Strings, Trees, and Sequences - Computer Science
  and Computational Biology}.
\newblock Cambridge University Press, 1997.

\bibitem{BonizzoniEtAl2021}
P.~Khorsand, L.~Denti, H.~G. S.~V. Consortium, P.~Bonizzoni, R.~Chikhi, and
  F.~Hormozdiari.
\newblock {Comparative genome analysis using sample-specific string detection
  in accurate long reads}.
\newblock {\em Bioinformatics Advances}, 1(1), 05 2021.

\bibitem{10.1093/bioinformatics/bts280}
H.~Li.
\newblock {Exploring single-sample SNP and INDEL calling with whole-genome de
  novo assembly}.
\newblock {\em Bioinformatics}, 28(14):1838--1844, 05 2012.

\bibitem{DBLP:journals/siamcomp/ManberM93}
U.~Manber and E.~W. Myers.
\newblock Suffix arrays: {A} new method for on-line string searches.
\newblock {\em {SIAM} J. Comput.}, 22(5):935--948, 1993.

\bibitem{DBLP:conf/birthday/MignosiRS99}
F.~Mignosi, A.~Restivo, and M.~Sciortino.
\newblock Forbidden factors in finite and infinite words.
\newblock In J.~Karhum{\"{a}}ki, H.~A. Maurer, G.~Paun, and G.~Rozenberg,
  editors, {\em Jewels are Forever, Contributions on Theoretical Computer
  Science in Honor of Arto Salomaa}, pages 339--350. Springer, 1999.

\bibitem{NavarroRaffinot2002}
G.~Navarro and M.~Raffinot.
\newblock {\em Flexible pattern matching in strings---practical on-line search
  algorithms for texts and biological sequences}.
\newblock Cambridge University Press, 2002.
\newblock 232 pages.

\bibitem{DBLP:journals/bmcbi/PinhoFGR09}
A.~J. Pinho, P.~J. S.~G. Ferreira, S.~P. Garcia, and J.~M. O.~S. Rodrigues.
\newblock On finding minimal absent words.
\newblock {\em {BMC} Bioinform.}, 10, 2009.

\bibitem{DBLP:journals/bioinformatics/SilvaPCPF15}
R.~M. Silva, D.~Pratas, L.~Castro, A.~J. Pinho, and P.~J. S.~G. Ferreira.
\newblock Three minimal sequences found in ebola virus genomes and absent from
  human {DNA}.
\newblock {\em Bioinform.}, 31(15):2421--2425, 2015.

\end{thebibliography}

\end{document}